\documentclass[11pt]{amsart} 

\usepackage{cite}
\usepackage[utf8]{inputenc}
\usepackage[T1]{fontenc} 
\usepackage[english]{babel} 
\usepackage{amsmath,amsfonts,amsthm, mathrsfs} 
\usepackage{tikz}
\usepackage{genyoungtabtikz}
\usepackage{graphics}
\usepackage{hyperref}

\theoremstyle{plain}
\newtheorem{theorem}{Theorem}
\newtheorem{corollary}{Corollary}
\newtheorem{lemma}{Lemma}
\newtheorem{proposition}{Proposition}

\theoremstyle{definition}

\DeclareMathOperator{\U}{U}
\DeclareMathOperator{\SU}{SU}

\DeclareMathOperator{\str}{str}

\DeclareMathOperator{\tr}{tr}

\title[Power series identity]{	
A power series identity and Bessel-type integrals over unitary groups
}

\author{Jimmy He} 
\thanks{This research was supported in part by NSERC}
\thanks{The author wishes to thank Dan Bump and Persi Diaconis for helpful conversations}
\address{Department of Mathematics, Stanford University, Stanford, CA  94305}
\email{jimmyhe@stanford.edu}

\begin{document}
\begin{abstract}
In 2008, Lehner, Wettig, Guhr and Wei conjectured a power series identity and showed that it implied a determinantal formula for a Bessel-type integral over the unitary supergroup. The integral is the supersymmetric extension of Bessel-type integrals over the unitary group appearing as partition functions in quantum chromodynamics. The identity is proved by interpreting both sides as the same unitary integral, which can be computed using the Cartan decomposition. An equivalent identity of Schur functions is also given and interpreted probabilistically.
\end{abstract}
\maketitle
\section{Introduction}
In the course of computing certain Bessel-like integrals over the unitary supergroup, Lehner, Wettig, Guhr and Wei conjectured a power series identity which would imply determinantal formulas for those integrals \cite{LWGW08}. The integrals are the supersymmetric analogue of Bessel-type integrals over the classical groups and symmetric spaces. The classical integrals were introduced to the mathematical literature in \cite{B52} and developed in a systematic manner in \cite{H55}. The integrals over the unitary group occur in computing partition functions in quantum chromodynamics \cite{LS92,JSV96} and lattice gauge theory \cite{BG80,GW80,AMM09}. They also appear in statistics in the study of non-central Wishart distributions \cite{H72}.

Use boldface to denote a vector. Given a vector $\mathbf{k}=(k_1,\dotsc,k_n)$, write $\mathbf{z}^\mathbf{k}=z_1^{k_1}\dotsm z_n^{k_n}$. Let $\Delta(\mathbf{z})=\prod_{i<j} (z_i-z_j)$ denote the Vandermonde determinant, and let $C_n=\prod _{i=0}^{n-1}i!$.

The main result proved is the following power series identity.
\begin{theorem}
\label{thm: power series identity}
Let $s\in\mathbf{C}$, and let $z_i\in \mathbf{C}$ for $i=1,\dotsc,m+n$. Then
\begin{equation}
\begin{split}
\label{eq: identity}
&\sum _{\mathbf{k}}\frac{\Delta(\mathbf{k})}{\prod _ik_i!\Gamma(k_i-s+1)}\mathbf{z}^\mathbf{k}
\\=&\sum _{\mathbf{k}}\frac{\Delta(\mathbf{k}_1)\Delta(\mathbf{k}_2)}{\prod _ik_i!\Gamma(k_i-s+1)}\prod_{\substack{{i\leq m}\\{j\geq m+1}}} \frac{z_i-z_j}{k_i+k_j-s+1}\mathbf{z}^\mathbf{k}
\end{split}
\end{equation}
where $\mathbf{k}=(k_1,\dotsc,k_{m+n})$, $\mathbf{k}_1=(k_1,\dotsc,k_m)$ and $\mathbf{k}_2=(k_{m+1},\dotsc,k_{m+n})$. 
\end{theorem}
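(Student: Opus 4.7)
Following the strategy suggested by the abstract, the plan is to exhibit a single Bessel-type integral $I(\mathbf{z}, s)$ over $U(m+n)$ and compute it in two ways: direct character expansion should yield the left-hand side of \eqref{eq: identity}, while the Cartan ($KAK$) decomposition with respect to the block subgroup $K = U(m) \times U(n)$ should yield the right-hand side.

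To locate the integrand, I would first exploit the following structural observation about the LHS. Via the Weyl dimension formula, $\Delta(\mathbf{k})/C_{m+n}$ is the dimension of the polynomial $GL_{m+n}$-representation of highest weight $\lambda_i = k_i - (m+n-i)$, and antisymmetrizing $\mathbf{z}^\mathbf{k}$ across permutations of $\mathbf{k}$ produces the determinant $\det(z_j^{k_i}) = s_\lambda(\mathbf{z})\,\Delta(\mathbf{z})$ by the Weyl character formula. Hence the LHS equals $\Delta(\mathbf{z})$ times a Schur expansion of the form $\sum_\lambda \dim V_\lambda \cdot s_\lambda(\mathbf{z})\cdot w_\lambda(s)$, where the $\Gamma(k_i - s + 1)^{-1}$ factors are packaged into the weight $w_\lambda(s)$. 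This shape is precisely what a Bessel-type unitary integral with a determinantal insertion like $\det(\cdot)^{-s}$ produces under Peter--Weyl expansion, pinning down $I(\mathbf{z}, s)$ up to normalization as the sort of integral studied in \cite{LWGW08}.

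For the second evaluation, I would write $U(m+n) = K \cdot A \cdot K$, where $A$ parametrizes the singular-value angles of the off-diagonal block (equivalently, the radial parameters on the Grassmannian $U(m+n)/K$). The two $K$-integrals, by independent Peter--Weyl expansions on $U(m)$ and $U(n)$, contribute the two Vandermondes $\Delta(\mathbf{k}_1)\Delta(\mathbf{k}_2)$ via the Weyl dimension formula applied to each factor. The $A$-integration, carried out against the Weyl--Harish-Chandra Jacobian of the Cartan decomposition, is then expected to supply the cross factor $\prod_{i \leq m,\, j \geq m+1}(z_i - z_j)/(k_i + k_j - s + 1)$: the numerators $z_i - z_j$ come from the Jacobian of the $KAK$ decomposition on the Grassmannian, and each denominator $k_i + k_j - s + 1$ from a one-dimensional Beta-type integral such as $\int_0^1 t^{k_i+k_j-s}\,dt$ in the radial variable.

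The main obstacle will be executing this Cartan-decomposition side cleanly: I will need to choose the integrand so that the $\Gamma(k_i - s + 1)^{-1}$ weights emerge without extraneous factors, track the Jacobian on $K \backslash U(m+n)/K$ against the correct Haar normalization, and handle analytic continuation in $s$ so that the radial Beta integrals produce $1/(k_i+k_j-s+1)$ in a meromorphic sense even when the integral does not converge in the classical range. Once both expansions of $I(\mathbf{z}, s)$ are in hand, the theorem reduces to a term-by-term comparison and follows from standard Weyl integration and character orthogonality on $U(m)$, $U(n)$, and $U(m+n)$.
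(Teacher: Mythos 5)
Your overall strategy --- evaluating a single Bessel-type integral over $\U(m+n)$ once by character expansion and once via the Cartan decomposition with respect to $\U(m)\times\U(n)$ --- is exactly the paper's, but as written the proposal is a plan with several genuine gaps rather than a proof. First, the integrand is never actually pinned down: the paper takes $\det(ZX)^s\exp(\tr(ZX+X^{-1}))$ with $Z$ diagonal with entries $z_i$ and $s$ a non-negative integer, and both evaluations rest on two specific prior results of Schlittgen and Wettig (Propositions \ref{prop: G integral} and \ref{prop: K integral}), not on redoing Peter--Weyl from scratch. Second, when the two-sided $K$-integral is applied to the $\U(n)\times\U(n)$ block, the matrix $\cos(\widetilde{\Theta})$ has $n-m$ eigenvalues equal to $1$, so the formula degenerates and a nontrivial limit must be taken; this is where the extra columns $l_j^{i-1}$ in the determinant, and ultimately the factor $\Delta(\mathbf{k}_2)$, come from. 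Your sketch does not notice this. Third, your attribution of the numerators $z_i-z_j$ to ``the Jacobian of the $KAK$ decomposition'' is wrong: the Jacobian $J(H)$ depends only on the angles $\theta_i$ and is precisely what cancels the Vandermondes in $\cos(\theta_i)^2$ produced by the two $K$-integrals; the cross factor $\prod(z_i-z_j)$ arises instead from the factorization $\Delta(\mathbf{z})=\Delta(\mathbf{z}_1)\Delta(\mathbf{z}_2)\prod_{i\leq m,\,j\geq m+1}(z_i-z_j)$ when the two evaluations are equated. (You are right that each denominator $k_i+k_j-s+1$ comes from a Beta-type radial integral, namely $\int_0^{\pi/2}\cos(\theta)^{2k_i+2l_j-2s+1}\sin(\theta)\,d\theta$; but for integer $s$ with $k_i,l_j\geq s$ this converges classically, so no meromorphic continuation is needed at that stage.)

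Two further steps are missing entirely. The Cartan-decomposition computation only determines the identity up to an overall multiplicative constant (the Haar and Lebesgue normalizations in Proposition \ref{prop: cartan decomp integral} are unspecified, and the degenerate limit introduces further constants), and the paper must separately prove this constant equals $1$ by setting variables $z_i$ to zero to reduce the parameters $(m,n,s)$ to $(m-1,n,s+1)$ and $(m,n-1,s+1)$, and then computing the $m=n=1$ case explicitly. And the passage from non-negative integer $s$ to arbitrary $s\in\mathbf{C}$ is not an analytic continuation of the integrals (which only make sense for integer $s$) but a coefficient comparison: for fixed $\mathbf{k}$ only finitely many terms contribute, both sides of the coefficient identity are rational functions of $s$, and agreement at infinitely many integers forces agreement everywhere. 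Without these two ingredients the argument does not close.
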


The proof of Theorem \ref{thm: power series identity} involves interpreting both sides as the same Bessel-type integral over the unitary group. Then the Cartan decomposition of $\U(m+n)$ with respect to $\U(m)\times \U(n)$ and previously known integrals are used to establish the identity for integer $s$. Finally, the result for arbitrary $s$ follows because the coefficients satisfy polynomial relations.

Specializing $s=0$ proves Conjecture D.1 in \cite{LWGW08} and establishes a determinantal formula for Bessel-like integrals over the unitary supergroup.
\begin{theorem}
\label{thm: superintegral formula}
Let $\beta\in \mathbf{C}$ and let $A,B$ be $(m+n)\times(m+n)$ supermatrices. Denote the eigenvalues of $AB$ by $\lambda_i^2$. Then
\begin{equation*}
\begin{split}
&\int _{\U(m|n)}\exp(\beta\str(AX+BX^{-1}))dX
\\=&C_mC_n\beta^{\frac{(m+n)-(m-n)^2}{2}}\frac{\det\left(\lambda_j^{m+n-i}I_{m+n-i}(2\beta\lambda_j)\right)}{\Delta(\lambda_1^2,\dotsc,\lambda_m^2)\Delta(\lambda_{m+1}^2,\dotsc,\lambda_{m+n}^2)}.
\end{split}
\end{equation*}
\end{theorem}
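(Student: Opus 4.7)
The plan is to translate the identity of Theorem~\ref{thm: power series identity} at $s=0$ into the determinantal formula, following the argument of Appendix~D in \cite{LWGW08}. Those authors already show that their Conjecture~D.1 (the power series identity) implies the determinantal formula for the supergroup integral, so once Theorem~\ref{thm: power series identity} is established the remaining work is to make that implication explicit with the correct identifications of variables.

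First I would set $s=0$ in \eqref{eq: identity} and substitute $z_i=\beta^2\lambda_i^2$, splitting the index set so that $\lambda_1^2,\dotsc,\lambda_m^2$ are the bosonic eigenvalues and $\lambda_{m+1}^2,\dotsc,\lambda_{m+n}^2$ the fermionic eigenvalues of $AB$. The left-hand side of \eqref{eq: identity} should match, up to the prefactor $C_mC_n\beta^{(m+n)-(m-n)^2/2}$, the direct power series expansion of the supergroup integral obtained by expanding $\exp(\beta\str(AX+BX^{-1}))$ and integrating term by term using Haar invariance on $\U(m|n)$. I would then recognize the right-hand side of \eqref{eq: identity} as the expansion of the Bessel determinant: from
\begin{equation*}
\lambda^\nu I_\nu(2\beta\lambda) = \sum_{k\geq 0}\frac{\beta^{2k+\nu}\lambda^{2k+2\nu}}{k!\,(k+\nu)!},
\end{equation*}
each summand in $\det(\lambda_j^{m+n-i}I_{m+n-i}(2\beta\lambda_j))$ contributes a factor of the form $1/(k_i!(k_i+m+n-\sigma(i))!)$. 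Dividing by $\Delta(\lambda_1^2,\dotsc,\lambda_m^2)\Delta(\lambda_{m+1}^2,\dotsc,\lambda_{m+n}^2)$ and applying an Andr\'eief-type collapse separately to the bosonic and fermionic blocks should reproduce the product $\Delta(\mathbf{k}_1)\Delta(\mathbf{k}_2)\prod_{i\le m,\,j\ge m+1}(z_i-z_j)/(k_i+k_j+1)$ appearing in \eqref{eq: identity}.

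The main obstacle I anticipate is the bookkeeping of prefactors and normalizations. The peculiar exponent $(m+n)-(m-n)^2/2$ on $\beta$ arises from tracking the lowest powers of $\beta\lambda_j$ produced by the Bessel expansion together with the degree offsets from the two Vandermondes; the supertrace and the Haar measure on $\U(m|n)$ carry sign conventions that must be reconciled; and the asymmetric roles of $\mathbf{k}_1$ and $\mathbf{k}_2$ must be handled carefully, since only one Vandermonde survives within each block after the Andr\'eief step while the cross-term $\prod(z_i-z_j)/(k_i+k_j+1)$ links the two blocks. These are the bookkeeping tasks already performed in \cite{LWGW08}; the substantive mathematical content is packaged entirely in Theorem~\ref{thm: power series identity}.
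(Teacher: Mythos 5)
Your proposal is correct and takes essentially the same route as the paper: the paper itself does not reprove the implication but simply invokes the conditional argument of Appendix D in \cite{LWGW08} (expanding the integrand into supercharacters and using orthogonality of matrix coefficients), so that Theorem \ref{thm: superintegral formula} follows immediately from Theorem \ref{thm: power series identity} at $s=0$. Your sketch of matching the Bessel-determinant expansion to the right-hand side of \eqref{eq: identity} is consistent with that deferred bookkeeping.
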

Here, $\U(m|n)$ denotes the unitary supergroup and $I_k$ denotes the modified Bessel function of the first kind of order $k$. The integral is with respect to both commuting and anti-commuting coordinates, and $\str$ denotes the supertrace. Note that there is no canonical scaling for the integral over $\U(m|n)$ and so a choice is made so that when $m=0$ or $n=0$ the result for the unitary group is recovered (see Proposition \ref{prop: G integral}). 

Theorem \ref{thm: superintegral formula} was established in \cite{LWGW08} conditional upon Theorem \ref{thm: power series identity}, and the reader is referred there for the proof of this, which involves expanding the integrand into supercharacters and utilizing orthogonality relations for the matrix coefficients. It is the supersymmetric analogue of similar integrals over the unitary group computed in \cite{SW03} with a similar proof.

The proof of Theorem \ref{thm: power series identity} does not rely on Lie supergroups or superanalysis; see \cite{B87} for general background on superanalysis and integration over Lie supergroups and \cite{LWGW08} and the references therein for the necessary definitions and background to Theorem \ref{thm: superintegral formula}.

Since both sides of \eqref{eq: identity} are anti-symmetric in $\mathbf{z}$, dividing by $\Delta(\mathbf{z})$ gives an identity in terms of Schur functions. A probabilistic interpretation of the identity in terms of random partitions is also given.

\section{The Cartan Decomposition}
\label{sec: Cartan Decomposition}
This section fixes coordinates and notation and explains the Cartan decomposition of $\U(m+n)$. See \cite{H79} or \cite{B04} for further information about the Cartan decomposition and real semisimple Lie groups.

Consider the Lie group $G=\U(m+n)$, with Lie algebra $\mathfrak{g}=\mathfrak{u}(m+n)$ given by skew-Hermitian matrices. Assume from now on that $m\leq n$. There is a Lie subgroup $K=\U(m)\times \U(n)$ embedded block diagonally with Lie algebra $\mathfrak{k}$. For this choice of $G$ and $K$, $(G,K)$ forms a symmetric pair, and so gives a Cartan decomposition $G=KAK$ and $\mathfrak{g}=\mathfrak{k}\oplus i\mathfrak{p}$, where $A=exp(i\mathfrak{a})$ with $i\mathfrak{a}\subseteq i\mathfrak{p}$ a maximal Abelian subalgebra. In particular, $\mathfrak{k}$ is embedded block-diagonally and $i\mathfrak{p}$ consists of all matrices of the form
\begin{equation*}
\left(\begin{array}{cc}
0&X\\
-X^*&0\\
\end{array}\right).
\end{equation*}
Pick the maximal Abelian subalgebra to be the set of matrices of the form
\begin{equation*}
H=\left(\begin{array}{ccc}
0&i\Theta&0\\
i\Theta&0&0\\
0&0&0\\
\end{array}\right)
\end{equation*}
for $\Theta$ an $m\times m$ diagonal matrix with entries $\theta_i$, $\theta_i$ real.

Consider the roots $\Sigma$ of the complexified Lie algebra $\mathfrak{g}_\mathbf{C}$ with respect to a Cartan subalgebra $\mathfrak{h}$ containing $\mathfrak{a}$. Call such roots which are not identically $0$ on $\mathfrak{a}$ \emph{restricted roots}, and denote the set of all restricted roots by $\Phi$. Let $\Phi^+$ denote the set of positive restricted roots (according to some choice of base for the root system).

The collection of all restricted roots forms a root system, although it may not be reduced. Moreover, the corresponding root spaces \begin{equation*}
\mathfrak{g}_\lambda=\{X\in \mathfrak{g}_\mathbf{C}\mid [H,X]=\lambda(H)X\text{ for all }H\in\mathfrak{a}\}
\end{equation*}
may fail to be one-dimensional. Call the dimension of the root space the multiplicity, and denote the multiplicity of $\lambda$ with $m_\lambda$. This is also equal to the number of roots in $\Sigma$ which restrict to $\lambda$ on $\mathfrak{a}$.

A \emph{Weyl chamber} of $\mathfrak{a}$ is a connected component of $\mathfrak{a}$ with the hyperplanes cut out by the restricted roots $\lambda\in \Phi$ removed. If $\Delta$ is a base of simple roots of $\Phi$, then the fundamental Weyl chamber is given by
\begin{equation*}
\mathfrak{a}^+=\{H\in \mathfrak{a}\mid \alpha(H)>0\text{ for all }\alpha\in \Delta\}.
\end{equation*}
A \emph{Weyl alcove} is a connected component of $\mathfrak{a}$ with the affine hyperplanes cut out by $\lambda(H)=n\pi$ for $n\in\mathbf{Z}$ removed. Denote by $\mathfrak{a}_0^+$ the Weyl alcove contained in $\mathfrak{a}^+$ and whose closure contains $0$.

In the case of interest with $G=\U(m+n)$ and $K=\U(m)\times \U(n)$, the restricted root system is of type $BC_n$ for $n>m$ and type $C_n$ for $n=m$, with positive roots $\varepsilon_i\pm \varepsilon_j$, $i>j$, $\varepsilon_i$ (only if $n>m$) and $2\varepsilon_i$, where $\varepsilon_i(H)=i\theta_i$. The corresponding simple roots are given by $\varepsilon_{i+1}-\varepsilon_{i}$ for $1\leq i\leq n-1$ and the root $\varepsilon_1$. With this choice,
\begin{equation*}
\mathfrak{a}_0^+=\left\{H\in\mathfrak{a}\mid 0<\theta_1<\dotso<\theta_n<\frac{\pi}{2}\right\}.
\end{equation*}

The following integration formula for integrals over a compact semisimple Lie group with respect to the Cartan decomposition is essential in the proof of Theorem \ref{thm: power series identity}. The result is due to Harish-Chandra \cite[Lemma 22]{HC56}, but see also \cite[Ch. 1, Theorem 5.10]{H84}.

\begin{proposition}[{\cite[Lemma 22]{HC56}}]
\label{prop: cartan decomp integral}
Let $G$ be a compact semisimple real Lie group, and let $K$ be a compact subgroup such that $(G,K)$ is a symmetric pair. Let $\mathfrak{g}=\mathfrak{k}\oplus i\mathfrak{p}$ be the Cartan decomposition and let $i\mathfrak{a}\subseteq i\mathfrak{p}$ be a maximal Abelian subalgebra. Then the measures on $G$, $K$ and $i\mathfrak{a}_0^+$ (Haar measure and Lebesgue measure respectively) may be normalized such that
\begin{equation*}
\int _{G}f(x)dx=\int _{K\times i\mathfrak{a}_0^+\times K}f(k_1e^Hk_2)J(H)dk_1dk_2dH
\end{equation*}
where
\begin{equation*}
J(H)=\prod_{\lambda\in \Phi^+} |\sin(-i\lambda(H))|^{m_\lambda}.
\end{equation*}
\end{proposition}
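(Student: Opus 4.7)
The plan is to realize the claimed formula as the Jacobian change-of-variables formula for the Cartan-decomposition multiplication map
\[
\mu: K\times i\mathfrak{a}_0^+\times K\longrightarrow G,\qquad (k_1,H,k_2)\longmapsto k_1 e^H k_2.
\]
First I would verify that the image of $\mu$ has full Haar measure in $G$ and that on the preimage of its regular locus, $\mu$ is a smooth submersion whose generic fibers are orbits of the centralizer $M=Z_K(\mathfrak{a})$ acting by $(k_1,H,k_2)\mapsto(k_1 m,H,m^{-1}k_2)$. This reduces to the Cartan decomposition $G=KAK$ combined with the fact that $i\mathfrak{a}_0^+$ is a fundamental domain for the affine Weyl group acting on $i\mathfrak{a}$; the excluded walls form a measure-zero set. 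Modulo this reduction, the change-of-variables formula leaves us to compute $|\det d\mu_{(1,H,1)}|$ up to a multiplicative constant that gets absorbed into the normalizations of the three measures.

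The Jacobian is then a linear-algebra exercise in the restricted root space decomposition. Using bi-invariance, left-translating by $e^{-H}$ identifies the differential at $(1,H,1)$ with the linear map
\[
\mathfrak{k}\oplus i\mathfrak{a}\oplus \mathfrak{k}\longrightarrow \mathfrak{g},\qquad (X,H',Y)\longmapsto \mathrm{Ad}(e^{-H})X + H' + Y.
\]
I would then decompose $\mathfrak{g}=\mathfrak{m}\oplus \mathfrak{a}\oplus \bigoplus_{\lambda\in\Phi^+}(\mathfrak{g}^\lambda\oplus \mathfrak{g}^{-\lambda})$ and correspondingly $\mathfrak{k}=\mathfrak{m}\oplus \bigoplus_{\lambda\in\Phi^+}\mathfrak{k}_\lambda$, with $\dim_{\mathbb{R}}\mathfrak{k}_\lambda=m_\lambda$. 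The $\mathfrak{m}\oplus \mathfrak{m}$ block degenerates precisely in the direction tangent to the $M$-fiber; the $i\mathfrak{a}$ block contributes determinant $1$; and for each $\lambda\in\Phi^+$ and each joint eigendirection in $\mathfrak{k}_\lambda$ the restricted map on a pair $(x,y)\in \mathfrak{k}_\lambda\oplus \mathfrak{k}_\lambda$ takes the form $(x,y)\mapsto (x\cosh\lambda(H)+y,\,-x\sinh\lambda(H))$ in the $\mathfrak{k}_\lambda\oplus i\mathfrak{p}_\lambda$ splitting of $\mathfrak{g}^\lambda\oplus \mathfrak{g}^{-\lambda}$, with determinant $\sinh\lambda(H)=i\sin(-i\lambda(H))$. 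Taking absolute values and multiplying over the $m_\lambda$ coordinates per root and over all $\lambda\in\Phi^+$ yields $J(H)=\prod_{\lambda\in\Phi^+}|\sin(-i\lambda(H))|^{m_\lambda}$.

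The hardest part is the root-space block computation in the middle step: one must choose compatible real bases for $\mathfrak{k}_\lambda$, $i\mathfrak{p}_\lambda$, and the complex eigenspaces of $\mathrm{Ad}(e^{-H})$, track how the Cartan involution $\theta$ interchanges $\mathfrak{g}^\lambda$ with $\mathfrak{g}^{-\lambda}$ compatibly with the real structure, and pair them so that the final exponent is $m_\lambda$ rather than $2m_\lambda$. In the non-reduced $BC_n$ case of interest here, $\varepsilon_i$ and $2\varepsilon_i$ are distinct positive roots with their own multiplicities and must each be handled individually. The surjectivity and measure-zero claims of the first step, and the absorption of scalar constants into the measure normalizations in the last step, are classical; see Helgason's treatment in \cite[Ch.~I, Thm.~5.10]{H84}.
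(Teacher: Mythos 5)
Your outline is correct, but note that the paper does not prove this proposition at all --- it is quoted from Harish-Chandra \cite[Lemma 22]{HC56} and Helgason \cite[Ch.~I, Thm.~5.10]{H84} --- and what you describe is precisely the standard change-of-variables argument given in those references: reduce to the Jacobian of $(k_1,H,k_2)\mapsto k_1e^Hk_2$, translate the differential to $(X,H',Y)\mapsto \mathrm{Ad}(e^{-H})X+H'+Y$, and compute block-by-block in the restricted root space decomposition, with each of the $m_\lambda$ directions of $\mathfrak{k}_\lambda$ contributing a $2\times 2$ block of determinant $\sinh\lambda(H)=i\sin(-i\lambda(H))$ while the degenerate $\mathfrak{m}\oplus\mathfrak{m}$ block accounts for the $M=Z_K(\mathfrak{a})$ fibers. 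The steps you flag as remaining (the fundamental-domain/measure-zero claims and the real-basis bookkeeping in the non-reduced $BC$ case) are exactly the ones carried out in Helgason's treatment, so your sketch faithfully reproduces the cited proof rather than offering an alternative to anything in this paper.
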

Note that although $\U(m+n)$ is not semisimple, it is a semidirect product of $\SU(m+n)$ with $\U(1)$ and so it is easy to see that the formula still holds. In the case of interest with $G=\U(m+n)$ and $K=\U(m)\times \U(n)$, Table \ref{tab: root multiplicity} gives the multiplicities of the restricted roots \cite[Table VI]{H79}. Then
\begin{equation*}
J(H)=\prod _{1\leq i<j\leq m}\left(\sin(\theta_i-\theta_j)^2\sin(\theta_i+\theta_j)^2\right)\prod _{i=1}^m\left(\sin(\theta_i)^{2(n-m)}\sin(2\theta_i)\right).
\end{equation*}

\begin{table}
\centering
\caption{Multiplicities of restricted roots for $\U(m+n)/\U(m)\times \U(n)$}
\label{tab: root multiplicity}
\begin{tabular}{|c|c|}
\hline
$\lambda$ & $m_\lambda$\\\hline
$\varepsilon_i-\varepsilon_j$ & $2$\\
$\varepsilon_i+\varepsilon_j$ & $2$\\
$\varepsilon_i$ & $2(n-m)$\\
$2\varepsilon_i$ & $1$\\\hline
\end{tabular}
\end{table}

\section{Unitary Integrals}
\label{sec: unitary integrals}
The following power series expansions for unitary integrals were computed in \cite{SW03} using character expansion methods, which expand the integrand into characters and then use orthogonality relations to simplify the result.

\begin{proposition}[{\cite[Eq. 14]{SW03}}]
\label{prop: G integral}
Let $A,B$ be $n\times n$ matrices with complex entries, with $AB$ diagonalizable, and let $s$ be a non-negative integer. Denote the eigenvalues of $AB$ by $z_i$. Then
\begin{equation*}
\int _{\U(n)}\det(AU)^s\exp(\tr(AU+BU^{-1}))dU=\frac{C_{n}}{\Delta(\mathbf{z})}\sum _{\mathbf{k}}\frac{\Delta(\mathbf{k})}{\prod _{i}k_i!(k_i-s)!}\mathbf{z}^\mathbf{k}.
\end{equation*}
\end{proposition}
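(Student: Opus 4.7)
The plan is to expand the integrand as a double sum of $\U(n)$-characters and collapse it using Schur orthogonality of matrix coefficients. Three ingredients are needed. First, for any $n \times n$ matrix $M$,
\begin{equation*}
\exp(\tr M) = \sum_\lambda \frac{f^\lambda}{|\lambda|!}\chi_\lambda(M),
\end{equation*}
obtained from the power-sum expansion $p_1^N = \sum_{\lambda \vdash N} f^\lambda s_\lambda$, with only partitions of length at most $n$ contributing; here $f^\lambda$ is the number of standard Young tableaux of shape $\lambda$ and $\chi_\lambda$ is the character of the polynomial $\GL(n)$-irrep of highest weight $\lambda$. Second, the $n$-variable Schur identity $s_{(s^n)} s_\lambda = s_{\lambda + (s^n)}$, which translates to $\det(M)^s \chi_\lambda(M) = \chi_{\lambda + (s^n)}(M)$. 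Third, Schur orthogonality of matrix coefficients, which yields
\begin{equation*}
\int_{\U(n)} \chi_\lambda(AU)\chi_\mu(BU^{-1})\,dU = \delta_{\lambda\mu}\frac{\chi_\lambda(AB)}{\dim V_\lambda},
\end{equation*}
verified by expanding each character as a sum of matrix coefficients and using $\rho_\mu(U^{-1}) = \rho_\mu(U)^*$.

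Combining the first two ingredients, $\det(AU)^s \exp(\tr AU) = \sum_{\nu : \nu_n \geq s} (f^{\nu - (s^n)}/|\nu - (s^n)|!)\chi_\nu(AU)$. Expanding $\exp(\tr BU^{-1})$ by the first ingredient, multiplying, and integrating term by term against Haar measure, orthogonality collapses the double sum to
\begin{equation*}
\sum_{\nu : \nu_n \geq s}\frac{f^{\nu - (s^n)}\,f^\nu}{|\nu - (s^n)|!\,|\nu|!\,\dim V_\nu}\, s_\nu(\mathbf{z}),
\end{equation*}
using $\chi_\nu(AB) = s_\nu(\mathbf{z})$ at the eigenvalues of $AB$.

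To convert to the claimed form, I would substitute the classical formulas $f^\lambda/|\lambda|! = \Delta(\ell)/\prod \ell_i!$ and $\dim V_\lambda = \Delta(\ell)/C_n$, where $\ell_i = \lambda_i + n - i$. Applying these to both $\nu$ and $\nu - (s^n)$ and using the translation invariance $\Delta(\ell - s) = \Delta(\ell)$, the coefficient simplifies to $C_n\Delta(\ell)/(\prod \ell_i!(\ell_i - s)!)$. Writing $s_\nu(\mathbf{z}) = \det(z_i^{\ell_j})/\Delta(\mathbf{z})$ and expanding the determinant as an alternating sum over $\sigma \in S_n$ unfolds the partition-labeled sum into the symmetric tuple-labeled sum over $\mathbf{k}$ in the statement: setting $k_i = \ell_{\sigma(i)}$ converts $\mathrm{sgn}(\sigma)\Delta(\ell)$ into $\Delta(\mathbf{k})$, and the extension from strictly decreasing tuples with $\ell_n \geq s$ to all of $\mathbf{Z}_{\geq 0}^n$ is free because $\Delta(\mathbf{k}) = 0$ on the diagonal and $1/(k_i - s)! = 0$ for $k_i < s$.

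The main obstacle is combinatorial bookkeeping rather than any deep step: one must carefully match the partition-labeled sum (indexed by strictly decreasing $\ell$) with the unrestricted tuple-labeled sum over $\mathbf{k}$, tracking the signs coming from the Vandermonde-determinant form of $s_\nu$. Convergence and the termwise interchange of sum and integral are automatic, since the series for $\exp(\tr(AU + BU^{-1}))$ converges absolutely and uniformly on the compact group $\U(n)$.
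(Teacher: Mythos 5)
Your proof is correct and follows exactly the route the paper attributes to the cited source \cite{SW03}: expand the integrand into characters and collapse the double sum via Schur orthogonality of matrix coefficients. The paper itself only quotes this result (noting that its form differs from \cite{SW03} by the determinant identity $\det(1/(k_j-n-s+i)!)=\Delta(\mathbf{k})/\prod k_i!$, which your derivation in effect reproduces directly), so nothing further is needed.
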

Note that Proposition \ref{prop: G integral} differs slightly from what is in \cite{SW03} because the identity (see e.g. \cite[Thm. C.3]{LWGW08})
\begin{equation*}
\det\left(\frac{1}{k_j-n-s-i}\right)=\frac{\Delta(\mathbf{k})}{\prod k_i!}
\end{equation*}
is used.

\begin{proposition}[{\cite[Eq. 23]{SW03}}]
\label{prop: K integral}
Let $A,B,C,D$ be $n\times n$ matrices with complex entries, with $AD$ and $BC$ diagonalizable, and let $s$ be a non-negative integer. Denote the eigenvalues of $AD$ by $x_i$ and those of $BC$ by $y_i$. Then
\begin{equation*}
\begin{split}
&\int _{\U(n)\times \U(n)}\det(UAVB)^s\exp(\tr(UAVB+U^{-1}CV^{-1}D))dU dV
\\=&\frac{C_n^2}{\Delta(\mathbf{x})\Delta(\mathbf{y})}\sum_{\mathbf{k}}\frac{1}{\prod k_i!(k_i-s)!}\det(x_i^{k_j})\mathbf{y}^\mathbf{k}.
\end{split}
\end{equation*}
\end{proposition}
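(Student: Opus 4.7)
The plan is to follow the character-expansion method, in parallel with the companion result Proposition \ref{prop: G integral} of \cite{SW03}. The first step is to expand $\exp(\tr(UAVB + U^{-1}CV^{-1}D))$ as a double sum of products of Schur polynomials via the identity
\[
\exp(\tr M) = \sum_\lambda \frac{d_\lambda}{|\lambda|!}\, s_\lambda(M),
\]
where $d_\lambda$ denotes the dimension of the irreducible $S_{|\lambda|}$-representation indexed by $\lambda$. The determinantal factor $\det(UAVB)^s$ then shifts the first partition index by $(s^n)$, via the elementary identity $\det(M)^s\, s_\lambda(M) = s_{\lambda + (s^n)}(M)$.

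Next, integrate first over $U\in\U(n)$. Schur orthogonality gives
\[
\int_{\U(n)} s_\lambda(UM)\, s_\mu(U^{-1}N)\, dU = \delta_{\lambda\mu}\,\frac{s_\lambda(NM)}{s_\lambda(I_n)},
\]
which forces $\mu = \lambda + (s^n)$ and collapses the double sum. The remaining $V$-integrand, after cyclic rearrangement, has the form $s_\mu(XV'Y(V')^{-1})$ with $X = BC$, $Y = DA$, $V' = V^{-1}$, to which the standard conjugation-average formula $\int_{\U(n)} s_\mu(XV'Y(V')^{-1})\, dV' = s_\mu(X)\,s_\mu(Y)/s_\mu(I_n)$ applies, producing $s_\mu(\mathbf{y})\,s_\mu(\mathbf{x})/s_\mu(I_n)$ since the eigenvalues of $DA$ and $BC$ are $\mathbf{x}$ and $\mathbf{y}$.

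The final step is to convert the resulting single sum into the claimed bideterminantal form. Reindexing via $k_i = \mu_i + n - i$, three identities combine cleanly: the hook-length formula $d_\mu/|\mu|! = \Delta(\mathbf{k})/\prod k_i!$; its shifted counterpart $d_{\mu-(s^n)}/|\mu-(s^n)|! = \Delta(\mathbf{k})/\prod(k_i - s)!$ (using shift-invariance of $\Delta$); and the Weyl dimension formula $s_\mu(I_n) = \Delta(\mathbf{k})/C_n$. Their product simplifies the overall coefficient to $C_n^2/\prod k_i!(k_i-s)!$. Writing $s_\mu(\mathbf{x}) = \det(x_i^{k_j})/\Delta(\mathbf{x})$ via the Weyl character formula for one factor and absorbing the other into $\mathbf{y}^\mathbf{k}$ through the antisymmetrization identity
\[
\sum_\mathbf{k} \det(x_i^{k_j})\, g(\mathbf{k})\, \det(y_i^{k_j}) = n!\, \sum_\mathbf{k} \det(x_i^{k_j})\, g(\mathbf{k})\, \mathbf{y}^\mathbf{k}
\]
(valid for any $g$ symmetric in $\mathbf{k}$) yields the stated formula; the sum over partitions extends to all $\mathbf{k}\in\mathbf{Z}_{\geq 0}^n$ since $1/\Gamma(k_i - s + 1)$ vanishes for $k_i < s$. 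The main technical obstacle will be precisely this last bookkeeping of dimension and hook-length factors together with the partition shift, ensuring that the clean denominator $\prod k_i!(k_i-s)!$ emerges with no extra normalization.
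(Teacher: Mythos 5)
The paper does not prove this proposition itself; it quotes it from [SW03] and notes only that it was obtained there by ``character expansion methods, which expand the integrand into characters and then use orthogonality relations,'' which is precisely the route you take, and your two orthogonality steps, the shift $\det(M)^s s_\lambda(M)=s_{\lambda+(s^n)}(M)$, and the hook-length/Weyl-dimension bookkeeping all check out, including the passage from the sum over partitions to the sum over all $\mathbf{k}$ (terms with $k_i<s$ or repeated $k_i$ vanish). So your proposal correctly supplies the omitted proof by essentially the same method the paper attributes to its source.
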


In both Proposition \ref{prop: G integral} and Proposition \ref{prop: K integral}, the formulas still make sense if the eigenvalues of $AB$ or $AD$ and $BC$ are not distinct. The poles coming from the Vandermonde determinant $\Delta(\mathbf{z})$ are canceled because the sums are anti-symmetric in $\mathbf{z}$. In particular, a limit may be taken to compute the integrals explicitly.

\section{Proof of Theorem \ref{thm: power series identity}}
\label{sec: proof}
In this section, Theorem \ref{thm: power series identity} is proved. This is done by expressing both sides of \eqref{eq: identity} as the same integral over the unitary group, using the Cartan decomposition and the unitary integrals in Section \ref{sec: unitary integrals}.

First, a simple extension of an integral version of the Cauchy-Binet formula, found in \cite{A83}, is developed, which may be of independent interest. If $A,B$ are $k\times m$ and $k\times n$ matrices, then write $(A\mid B)$ to denote the matrix whose first $m$ columns are given by $A$ and whose last $n$ columns are given by $B$.

\begin{lemma}
\label{lemma: cauchy-binet}
Let $m\leq n$ and let $f_i$ and $g_j$, for $1\leq i\leq m$ and $1\leq j\leq n$, be functions on some measure space $\Omega$ such that $f_ig_j$ is integrable for all $i,j$ and $C$ some $n\times (n-m)$ matrix. Then
\begin{equation*}
\int _{\Omega^m}\det(f_i(x_j))\det(g_i(x_j)\mid C)dx_1\dotsm dx_m=m!\det\left(\int _{\Omega}f_i(x)g_j(x)\;\middle|\; C \right).
\end{equation*}
\end{lemma}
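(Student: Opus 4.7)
The plan is to reduce the identity to the classical Andréief (generalized Cauchy--Binet) formula
\[
\int_{\Omega^m}\det(f_i(x_j))\det(h_i(x_j))\,dx_1\cdots dx_m = m!\det\!\left(\int_\Omega f_i(x)h_j(x)\,dx\right)
\]
for $m\times m$ determinants, by opening up the $n\times n$ determinant $\det(g_i(x_j)\mid C)$ along the last $n-m$ columns.

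Concretely, I would first apply the Laplace expansion of $\det(g_i(x_j)\mid C)$ along its last $n-m$ columns. This expresses it as a signed sum
\[
\det(g_i(x_j)\mid C) = \sum_{S}\varepsilon(S)\,\det\bigl(g_s(x_j)\bigr)_{s\in S,\,1\le j\le m}\,\det(C_{S^c})
\]
over $m$-element subsets $S\subseteq\{1,\dots,n\}$, where $C_{S^c}$ is the square $(n-m)\times(n-m)$ submatrix of $C$ obtained by keeping the rows indexed by $S^c$, and $\varepsilon(S)$ is the usual Laplace sign determined by $S$ and the column positions.

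Next I would substitute this expansion into the left-hand side, pull the factors $\varepsilon(S)\det(C_{S^c})$ outside the integral, and apply the $m\times m$ Andréief identity to the remaining integrand $\det(f_i(x_j))\det(g_s(x_j))_{s\in S}$. This yields
\[
\sum_{S}\varepsilon(S)\det(C_{S^c})\cdot m!\det\!\left(\int_\Omega f_i(x)g_s(x)\,dx\right)_{1\le i\le m,\,s\in S}.
\]
Finally, I would observe that this is precisely $m!$ times the Laplace expansion, along the last $n-m$ columns, of the $n\times n$ determinant $\det\bigl(\int_\Omega f_i(x)g_j(x)\,dx\,\bigm|\,C\bigr)$ appearing on the right-hand side; the signs $\varepsilon(S)$ agree because the same subsets appear in the same column positions in both expansions.

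There is no real obstacle here; the only care needed is in matching the signs in the two Laplace expansions, which is automatic once the column indexing is kept consistent. The classical Andréief identity does all the analytic work, and the $C$-columns simply pass through as spectators.
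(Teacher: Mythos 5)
Your proposal is correct and follows essentially the same route as the paper's proof: Laplace expansion of $\det(g_i(x_j)\mid C)$ along the last $n-m$ columns, Andr\'eief's identity applied to each $m\times m$ term, and reassembly via the Laplace expansion of the right-hand determinant. No gaps.
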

\begin{proof}
First use the Laplace expansion along the last $n-m$ columns of $\det(g_i(x_j)|C)$ to obtain
\begin{equation*}
\det(g_i(x_j)\mid C)=\sum _{|S|=m}\varepsilon(S)\det((g_i(x_j))_S)\det(C_{S^c})
\end{equation*}
where for a matrix $X$, $X_S$ denotes the submatrix given by taking the rows in $S$ and $\varepsilon(S)$ is a sign determined by $S$. Then an integration formula of Andr\'eief (see \cite{A83}) gives
\begin{equation*}
\int _{\Omega^m}\det(f_i(x_j))\det((g_i(x_j))_S)dx_1\dotsm dx_m=m!\det\left(\int _{\Omega}f_i(x)g_j(x)dx\right)_S
\end{equation*}
and this gives the result after another application of the Laplace expansion.
\end{proof}

Theorem \ref{thm: power series identity} is proved for integers by first establishing that it holds up to a scalar, and then showing that the scalar must be $1$.

\begin{lemma}
\label{lemma: main lemma}
Let $s$ be a non-negative integer and let $z_i\in\mathbf{C}$ for $i=1,\dotsc,m+n$. Then there exists a constant $C$ such that
\begin{equation}
\label{eq: main lemma}
C\sum _{\mathbf{k}}\frac{\Delta(\mathbf{k})}{\prod _ik_i!(k_i-s)!}\mathbf{z}^\mathbf{k}=\sum _{\mathbf{k}}\frac{\Delta(\mathbf{k}_1)\Delta(\mathbf{k}_2)}{\prod _ik_i!(k_i-s)!}\prod_{\substack{{i\leq m}\\{j\geq m+1}}} \frac{z_i-z_j}{k_i+k_j-s+1}\mathbf{z}^\mathbf{k}
\end{equation}
where $\mathbf{k}=(k_1,\dotsc,k_{m+n})$, $\mathbf{k}_1=(k_1,\dotsc,k_m)$ and $\mathbf{k}_2=(k_{m+1},\dotsc,k_{m+n})$.
\end{lemma}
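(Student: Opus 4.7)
The strategy is to interpret both sides of \eqref{eq: main lemma} as the same Bessel-type integral over $\U(m+n)$, evaluated in two different ways --- directly by Proposition \ref{prop: G integral} on one side, and via the Cartan decomposition of $\U(m+n)$ with respect to $K=\U(m)\times\U(n)$ on the other.

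First, by Proposition \ref{prop: G integral}, for any $(m+n)\times(m+n)$ matrices $A,B$ whose product $AB$ has eigenvalues $z_1,\dotsc,z_{m+n}$, the LHS of \eqref{eq: main lemma} equals (up to the explicit factor $C_{m+n}/\Delta(\mathbf{z})$ and the undetermined constant $C$) the unitary integral $\int_{\U(m+n)}\det(AU)^s\exp(\tr(AU+BU^{-1}))\,dU$. I would choose $A$ and $B$ to have block structure compatible with the splitting $\mathbf{C}^{m+n}=\mathbf{C}^m\oplus\mathbf{C}^n$ that defines $K$, so that the first $m$ eigenvalues of $AB$ come from the upper-left $m\times m$ block and the remaining $n$ from the lower-right $n\times n$ block; this matches the partition $\mathbf{k}=(\mathbf{k}_1,\mathbf{k}_2)$ appearing on the RHS.

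Second, I evaluate the same integral using Proposition \ref{prop: cartan decomp integral}. Writing $U=k_1e^Hk_2$ with $k_1,k_2\in K$ and $H\in i\mathfrak{a}_0^+$, and cycling under the trace, the integrand takes a form to which Proposition \ref{prop: K integral} (applied to each block) can be used with $e^H$ playing the role of one of the linear factors. The $K$-integration, via character orthogonality, outputs a power series indexed by $\mathbf{k}=(\mathbf{k}_1,\mathbf{k}_2)$ with prefactor $\Delta(\mathbf{k}_1)\Delta(\mathbf{k}_2)/\prod k_i!(k_i-s)!$ multiplied by trigonometric functions of the $\theta_i$, which then has to be integrated against the Jacobian $J(H)$ displayed after Proposition \ref{prop: cartan decomp integral}.

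Finally, I apply Lemma \ref{lemma: cauchy-binet} (Andr\'eief's formula with an extra block of columns provided by the $\sin(\theta_i)^{2(n-m)}\sin(2\theta_i)$ factors in $J(H)$) to reduce the resulting $m$-dimensional trigonometric integral to a single determinant of one-variable integrals of the form $\int_0^{\pi/2}\sin(\theta)^a\cos(\theta)e^{2ic\theta}\,d\theta$. Each such integral is a rational function of $c$ with linear denominator --- this is exactly the mechanism that produces the cross-factors $(z_i-z_j)/(k_i+k_j-s+1)$ on the RHS of \eqref{eq: main lemma}. The main obstacle will be arranging the block structure of $A,B$ so that the two $K$-integrations separate cleanly and the trigonometric integrand assembles into the column structure demanded by Lemma \ref{lemma: cauchy-binet}. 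Since the lemma only asserts proportionality with some unknown constant $C$, I do not need to track Jacobian normalizations, Weyl-group orbit sizes, or Haar-measure conventions --- only the joint dependence on $\mathbf{z}$ and $\mathbf{k}$ must be matched term-by-term.
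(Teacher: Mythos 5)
Your overall strategy is the paper's: evaluate $\int_{\U(m+n)}\det(ZX)^s\exp(\tr(ZX+X^{-1}))\,dX$ (with $Z$ diagonal with entries $z_i$, so already block-compatible with $K$) once by Proposition \ref{prop: G integral} and once through Proposition \ref{prop: cartan decomp integral}, applying Proposition \ref{prop: K integral} to each block and finishing with Lemma \ref{lemma: cauchy-binet}. But there is a genuine gap in how you set up the final step. When Proposition \ref{prop: K integral} is applied to the $\U(n)\times\U(n)$ factor, the relevant matrix is $\cos(\widetilde{\Theta})$, which has the eigenvalue $1$ with multiplicity $n-m$; the formula of Proposition \ref{prop: K integral} is then formally $0/0$, and one must perturb the repeated eigenvalues to $x_1,\dotsc,x_{n-m}$, differentiate, and let $x_i\to 1$. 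It is \emph{this limit} that produces the extra block of $n-m$ columns $l_j^{i-1}$ which serves as the matrix $C$ in Lemma \ref{lemma: cauchy-binet} — not the factors $\sin(\theta_i)^{2(n-m)}\sin(2\theta_i)$ of $J(H)$, as you claim. Those Jacobian factors are instead rewritten as $\Delta(\cos^2\theta_1,\dotsc,\cos^2\theta_m)^2\prod_i(1-\cos^2\theta_i)^{n-m}$ and are entirely consumed cancelling the Vandermonde denominators left over from the two applications of Proposition \ref{prop: K integral} and from the limit. As written, your plan hits the degenerate-eigenvalue problem as soon as $n>m$ and has nothing left to supply the columns of $C$.

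Two further mechanisms are misattributed, though they are more easily repaired. The one-variable integrals that arise are $\int_0^{\pi/2}\cos(\theta)^{2k_i+2l_j-2s+1}\sin(\theta)\,d\theta=\tfrac{1}{2(k_i+l_j-s+1)}$, not oscillatory integrals of the form $\int\sin^a\theta\cos\theta\,e^{2ic\theta}d\theta$; the linear denominator comes from the exponent of $\cos\theta$. And the prefactor $\Delta(\mathbf{k}_1)\Delta(\mathbf{k}_2)$ is not output by the $K$-integration (Proposition \ref{prop: K integral} yields only $1/\prod k_i!(k_i-s)!$ times a determinant in the $\theta_i$); it emerges only at the very end from evaluating the bordered Cauchy determinant $\det\left(\frac{1}{k_i+l_j-s+1}\;\middle|\;l_j^{i-1}\right)$ as $\pm\,\Delta(\mathbf{k}_1)\Delta(\mathbf{k}_2)/\prod_{i,j}(k_i+l_j-s+1)$, a determinant identity your outline omits. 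The cross-factors $z_i-z_j$ then appear upon multiplying through by $\Delta(\mathbf{z})=\Delta(\mathbf{z}_1)\Delta(\mathbf{z}_2)\prod_{i\leq m,\,j\geq m+1}(z_i-z_j)$, which is consistent with your plan.
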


\begin{proof}
Let $Z$ be a diagonal matrix with entries $z_i$. The integral
\begin{equation}
\label{exp: integral over U(m_n)}
\int _{\U(m+n)}\det(ZX)^s\exp(\tr(ZX+X^{-1}))dX
\end{equation}
will be evaluated in two ways, which will give the two sides of the desired identity. First, by Proposition \ref{prop: G integral}, 
\begin{equation*}
\int _{\U(m+n)}\det(ZX)^s\exp(\tr(ZX+X^{-1}))dX=\frac{C_{m+n}}{\Delta(\mathbf{z})}\sum _{\mathbf{k}}\frac{\Delta(\mathbf{k})}{\prod k_i!(k_i-s)!}\mathbf{z}^\mathbf{k}
\end{equation*}
where the sum is over $k_i\geq s$.

Next, the same integral is evaluated using the Cartan decomposition. By Proposition \ref{prop: cartan decomp integral} the integral in \eqref{exp: integral over U(m_n)} is equal to
\begin{equation}
\label{exp: cartan integral}
\int _{K\times i\mathfrak{a}^+_0\times K}\det(ZUe^HV)\exp(\tr(ZUe^HV+V^{-1}e^HU^{-1}))J(H)dUdHdV.
\end{equation}
Now
\begin{equation*}
e^H=\left(\begin{array}{ccc}
\cos(\Theta)&i\sin(\Theta)&0\\
i\sin(\Theta)&\cos(\Theta)&0\\
0&0&I\\
\end{array}\right)
\end{equation*}
and so if $U_i$, $V_i$ and $Z_i$, for $i=1,2$, are the two blocks of $U$, $V$ and $Z$ respectively, since $U$, $V$ and $Z$ are all block-diagonal,
\begin{equation*}
\begin{split}
&\tr(ZUe^HV+V^{-1}e^HU^{-1})
\\=&\tr(Z_1U_1\cos(\Theta)V_1+V_1^{-1}\cos(\Theta)U_1^{-1})+\tr(Z_2U_2\cos(\widetilde{\Theta})V_2+V_2^{-1}\cos(\widetilde{\Theta})U_2^{-1})
\end{split}
\end{equation*}
where
\begin{equation*}
\widetilde{\Theta}=\left(\begin{array}{cc}
\Theta&0\\
0&I\\
\end{array}\right).
\end{equation*}
Then factor the integrand of \eqref{exp: cartan integral} as
\begin{equation*}
\begin{split}
&\det(U_1V_1Z_1)^s\exp(\tr(U_1\cos(\Theta)V_1Z_1+U_1^{-1}V_1^{-1}\cos(\Theta)))
\\&\qquad\times \det(U_2V_2Z_2)^s\exp(\tr(U_2\cos(\widetilde{\Theta})V_2Z_2+U_2^{-1}V_2^{-1}\cos(\widetilde{\Theta})))
\end{split}
\end{equation*}
and apply Proposition \ref{prop: K integral} to compute the integral \eqref{exp: cartan integral} over $\U(m)\times \U(m)$ and $\U(n)\times \U(n)$. The first factor poses no issues and gives
\begin{equation}
\label{exp: factor 1}
\frac{C_m^2}{\Delta(\mathbf{z}_1)}\sum _{\mathbf{k}}\frac{1}{\prod_{i=1}^mk_i!(k_i-s)!}\frac{\det(\cos(\theta_i)^{2k_j-s})}{\Delta(\cos(\theta_1)^2,\dotsc,\cos(\theta_m)^2)}\mathbf{z}_1^{\mathbf{k}}
\end{equation}

where $\mathbf{z}_1=(z_1,\dotsc,z_m)$. For the second factor, note that $\cos(\widetilde{\Theta})$ has repeated eigenvalues if $n-m\geq 2$, so a limit must be taken.

Write 
\begin{equation*}
\widetilde{\Theta}=\lim _{x_i\rightarrow 1}\left(\begin{array}
{cc}
\Theta&0\\
0&X\\
\end{array}\right)
\end{equation*}
where $X$ is diagonal with entries $x_i$. Then applying Proposition \ref{prop: K integral} gives 
\begin{equation*}
\lim_{x_i\rightarrow 1}\frac{C_n^2}{\Delta(\mathbf{z}_2)}\sum _{\mathbf{l}}\frac{1}{\prod_{i=1}^ml_i!(l_i-s)!}\frac{\det\left(\cos(\theta_i)^{2l_j-s}\;\middle|\; x_i^{l_j}\right)}{\Delta(\cos(\theta_1)^2,\dotsc,\cos(\theta_m)^2,x_1,\dotsc,x_{n-m})}\mathbf{z}_2^{\mathbf{k}}
\end{equation*}
Now write
\begin{equation*}
\begin{split}
&\Delta\left(\cos(\theta_1)^2,\dotsc,\cos(\theta_m)^2,x_1,\dotsc,x_{n-m}\right)
\\=&\Delta\left(\cos(\theta_1)^2,\dotsc,\cos(\theta_m)^2\right)\Delta(x_1,\dotsc,x_{n-m})\prod _{i=1}^m\prod_{j=1}^{n-m} \left(\cos(\theta_i)^2-x_j\right)
\end{split}
\end{equation*}
and compute
\begin{equation*}
\lim _{x_i\rightarrow 1}\frac{\det\left(\cos(\theta_i)^{2l_j-s}\;\middle|\; x_i^{l_j}\right)}{\Delta(x_1,\dotsc,x_{n-m})}
\end{equation*}
by taking the derivative with respect to $x_i$ $i-1$ times and setting $x_i=1$. Because
\begin{equation*}
\Delta(x_1,\dotsc,x_{n-m})=\det\left(\begin{array}{cccc}
x_1^{n-m-1}&\dots&x_{n-m}^{n-m-1}\\
x_1^{n-m-2}&\dots&x_{n-m}^{n-m-2}\\
\vdots&\ddots&\vdots\\
1&\dots&1\\
\end{array}\right)
\end{equation*}
the denominator will become $(-1)^{(n-m)(n-m-1)/2}C_{n-m}$, and the numerator will become
\begin{equation*}
\det\left(\cos(\theta_i)^{2l_j-s}\;\middle|\; (l_j)_{i-1}\right)=\det\left(\cos(\theta_i)^{2l_j-s}\;\middle|\; l_j^{i-1}\right)
\end{equation*}
where $(x)_n=x(x-1)\dotsm(x-n+1)$ denotes the falling factorial. Thus, the integral over $\U(n)\times \U(n)$ gives
\begin{equation}
\label{exp: factor 2}
\frac{C}{\Delta(\mathbf{z}_2)}\sum _{\mathbf{l}}\frac{1}{\prod_{i=1}^ml_i!(l_i-s)!}\frac{\det\left(\cos(\theta_i)^{2l_j-s}\;\middle|\; l_j^{i-1}\right)}{\Delta(\cos(\theta_1)^2,\dotsc,\cos(\theta_m)^2)\prod _{i}(1-\cos(\theta_i)^2)^{n-m}}\mathbf{z}_2^{\mathbf{k}}
\end{equation}
for some constant $C$ (which will be allowed to change from line to line).

Now rewrite $J(H)$ giving
\begin{equation*}
\begin{split}
J(H)&=\prod_{i<j} \sin(\theta_i-\theta_j)^2\sin(\theta_i+\theta_j)^2\prod _i\sin(\theta_i)^{2(n-m)}\sin(2\theta_i)
\\&=\Delta(\cos(\theta_1)^2,\dotsc,\cos(\theta_m)^2)^2\prod _i(1-\cos(\theta_i)^2)^{n-m}
\end{split}
\end{equation*}
which cancels out the corresponding factors in the denominators of the two factors, \eqref{exp: factor 1} and \eqref{exp: factor 2}. Thus,
\begin{equation*}
\begin{split}
&\int _{\U(m+n)}\det(ZX)^s\exp(\tr(ZX+X^{-1}))dX
\\=&\frac{C}{\Delta(\mathbf{z}_1)\Delta(\mathbf{z}_2)}\sum _{\mathbf{k},\mathbf{l}}\left(\frac{\mathbf{z}_1^\mathbf{k}\mathbf{z}_2^\mathbf{l}}{\prod_i k_i!(k_i-s)!\prod _j l_j!(l_j-s)!}\right.
\\&\qquad\left.\times\int _{i\mathfrak{a}_0^+}\det\left(\cos(\theta_i)^{2k_j-s+1}\sin(\theta_i)\right)\det\left(\cos(\theta_i)^{2l_j-s}\;\middle|\; l_j^{i-1}\right)d\theta_1\dotsm d\theta_m\right).
\end{split}
\end{equation*}

Finally, note that the integral over the Weyl alcove can be written as an integral over $[0,\pi/2]^m$ at the cost of a factor of $m!$ because the integrand is symmetric with respect to the $\theta_i$. Then apply Lemma \ref{lemma: cauchy-binet} to evaluate the integral as
\begin{equation*}
\begin{split}
&\int _{i\mathfrak{a}_0^+}\det\left(\cos(\theta_i)^{2k_j-s+1}\sin(\theta_i)\right)\det\left(\cos(\theta_i)^{2l_j-s}\;\middle|\; x_i^{l_j}\right)d\theta_1\dotsm d\theta_m
\\=&\det\left(\int_0^{\pi/2}\cos(\theta)^{2k_i+2l_j-2s+1}\sin(\theta)d\theta\;\middle|\;l_j^{i-1}\right)
\\=&2^{-m}\det\left(\frac{1}{k_i+l_j-s+1}\;\middle|\;l_j^{i-1}\right).
\end{split}
\end{equation*}
Then as
\begin{equation*}
\det\left(\frac{1}{k_i+l_j-s+1}\;\middle|\;l_j^{i-1}\right)=(-1)^{m(n-m)+{n-m\choose 2}}\frac{\Delta(\mathbf{k})\Delta(\mathbf{l})}{\prod _{i,j}(k_i+l_j-s+1)},
\end{equation*}
see \cite[Lemma 2]{BF94}, this establishes Lemma \ref{lemma: main lemma} after multiplying both sides by $\Delta(\mathbf{z})$.
\end{proof}

The computation of the constant $C$ in Lemma \ref{lemma: main lemma} can be reduced to the case where $m=n=1$, where it can be explicitly computed. Finally, $s$ can be extended to an arbitrary complex constant by comparing coefficients.

\begin{proof}[Proof of Theorem \ref{thm: power series identity}]
First, notice that the restriction that $s$ be non-negative can be removed by dividing both sides by $\prod _{i=1}^{m+n}z_i^s$, and re-indexing the summation. In particular, the constant does not change.

Now suppose that
\begin{align}
\label{exp: LHS}
&C\sum_{\mathbf{k}} \frac{\Delta(\mathbf{k})}{\prod _{i}k_i!(k_i+s)!}z_1^{k_1}\dotsm z_{m+n}^{k_{m+n}}
\\=&\sum _{\mathbf{k}}\frac{\Delta(\mathbf{k}_1)\Delta(\mathbf{k}_2)}{\prod_{i} k_i!(k_i+s)!}\prod _{\substack{i\leq m\\j\geq m+1}}\frac{z_i-z_j}{k_i+k_j+s+1}z_1^{k_1}\dotsm z_{m+n}^{k_{m+n}}
\label{exp: RHS}
\end{align}
holds for some $m,n$ and $s$ a non-negative integer. Then setting $z_{m+n}=0$ in \eqref{exp: LHS} gives
\begin{equation*}
\frac{C}{s!}\sum_{k_1,\dotsc,k_{m+n-1}} \frac{\Delta(k_1,\dotsc,k_{m+n-1})}{\prod _{i}k_i!(k_i+s+1)!}z_1^{k_1+1}\dotsm z_{m+n}^{k_{m+n-1}+1}
\end{equation*}
after re-indexing. Similarly setting $z_{m+n}=0$ in \eqref{exp: RHS} gives
\begin{equation*}
\begin{split}
&\frac{1}{s!}\sum _{k_1,\dotsc,k_{m+n}}\frac{\Delta(k_1,\dotsc,k_m)\Delta(k_{m+1},\dotsc,k_{m+n-1})}{\prod_{i\leq m} k_i!(k_i+s+1)!}
\\&\qquad\times\prod _{\substack{i\leq m\\j\geq m+1}}\frac{z_i-z_j}{k_i+k_j+s+1+1}z_1^{k_1+1}\dotsm z_{m+n-1}^{k_{m+n-1}+1}.
\end{split}
\end{equation*}
Then dividing through by $\prod z_i$ and multiplying by $s!$ gives the same identity as in \eqref{eq: main lemma} (with the same constant), with parameters $m,n-1,s+1$ instead of $m,n,s$. By setting $z_1=0$, the same identity with parameters $m-1,n,s+1$ is obtained. This reduces to the case of $m=n=1$, with arbitrary integer $s$. Then the following computation
\begin{equation*}
\begin{split}
&\sum _{k_1,k_2}\frac{k_1-k_2}{k_1!k_2!(k_1+s)!(k_2+s)!}z_1^{k_1}z_2^{k_2}
\\=&\sum _{k_1,k_2}\frac{k_1(k_1+s)-k_2(k_2+s)}{k_1!k_2!(k_1+s)!(k_2+s)!}\frac{1}{k_1+k_2+s}z_1^{k_1}z_2^{k_2}
\\=&\sum _{k_1,k_2}\frac{1}{k_1!k_2!(k_1+s)!(k_2+s)!}\frac{z_1-z_2}{k_1+k_2+s+1}z_1^{k_1}z_2^{k_2}
\end{split}
\end{equation*}
for the $m=n=1$ case shows that $C=1$.

To see that the result holds for $s\in\mathbf{C}$, first note that it suffices to prove
\begin{equation}
\label{eq: coeff}
\frac{\Delta(\mathbf{k})}{\prod _i k_i!}=[\mathbf{z}^\mathbf{k}]\sum _{\mathbf{l}}\frac{\Delta(\mathbf{l}_1)\Delta(\mathbf{l}_2)\Gamma(k_i-s+1)}{\prod _il_i!\Gamma(l_i-s+1)}\prod_{\substack{{i\leq m}\\{j\geq m+1}}} \frac{z_i-z_j}{l_i+l_j-s+1}\mathbf{z}^\mathbf{l}
\end{equation}
holds for all $\mathbf{k}$, where $[\mathbf{z}^\mathbf{k}]f(\mathbf{z})$ denotes the coefficient of $\mathbf{z}^\mathbf{k}$ in $f(\mathbf{z})$. But notice that $[\mathbf{z}^\mathbf{k}]\prod (z_i-z_j)\mathbf{z}^{\mathbf{l}}=0$ for all but finitely many terms in the sum in \eqref{eq: coeff}, and $\Gamma(k_i-s+1)/\Gamma(l_i-s+1)$ is a rational function in $s$, and so both sides of \eqref{eq: coeff} are rational functions of $s$. Furthermore, it was already shown that \eqref{eq: coeff} holds for integer $s$, and so it must hold for $s\in\mathbf{C}$.
\end{proof}

\section{Probabilistic Interpretation}
In this section, Theorem \ref{thm: power series identity} is rewritten in terms of Schur functions. The coefficients are then interpreted as probabilities on partitions, and the limiting distribution is found. See \cite{M79} for background on symmetric functions and see \cite{B04} for more on the representation theory of $\SU(n)$.

Let $\lambda=(\lambda_1,\dotsc,\lambda_k)$ denote a partition of length $k$. For $i>k$, let $\lambda_i=0$. Another way of viewing a partition is as a Young diagram, an array of boxes with $\lambda_i$ boxes in the $i$th row, justified left.

Let $k_i=\lambda_i+n-i$ and let
\begin{equation*}
s_\lambda(z_1,\dotsc,z_n)=\frac{\det(z_i^{k_j})}{\Delta(z_1,\dotsc,z_n)}
\end{equation*}
be the Schur functions. These functions are symmetric polynomials in $z_i$ and have applications in representation theory and combinatorics. Use $s_\lambda(1^n)$ to denote $s_\lambda(1,\dotsc,1)$ with $n$ variables.

The \emph{hook length} of $s\in \lambda$ is defined to be $1$ plus the number of boxes below and to the right of $s$. Given a partition $\lambda=(\lambda_1,\dotsc,\lambda_n)$, let $h(\lambda)$ denote the product of the hook lengths of the boxes in $\lambda$. For a partition of length at most $n$, let $k_i=\lambda_i+n-i$. Then
\begin{equation}
\label{eq: hook length}
h(\lambda)=\frac{\prod_{i=1}^nk_i!}{\Delta(k_1,\dotsc,k_n)}.
\end{equation}

Let $(k^l)$ denote the partition of length $l$, with all parts equal to $k$. Finally, given two partitions $\lambda,\mu$, let $\lambda+\mu$ denote $(\lambda_1+\mu_1,\dotsc,\lambda_n+\mu_n)$, $\lambda/\mu$ denote $(\lambda_1-\mu_1,\dotsc,\lambda_n-\mu_n)$ (assuming $\lambda_i\geq \mu_i$ for all $i$), and $\lambda\cup \mu$ denote the partition given by appending $\mu$ to the end of $\lambda$ (assuming $\mu_1\leq \lambda_n$).

\begin{corollary}
\label{cor: schur identity}
Let $s$ be a non-negative integer. Then
\begin{equation}
\label{eq: schur identity}
\sum _{\lambda}a_s(\lambda)\frac{s_\lambda(z_1,\dotsc,z_{m+n})}{s_\lambda(1^{m+n})}=\sum _{\mu,\nu}b_s(\mu,\nu)\frac{s_\mu(z_1,\dotsc,z_m)s_\nu(z_{m+1},\dotsc,z_{m+n})}{s_\mu(1^m)s_\nu(1^n)}
\end{equation}
where
\begin{align*}
a_s(\lambda)&=\frac{1}{h(\lambda)h(\lambda+(s^{m+n}))},
\\b_s(\mu,\nu)&=\frac{h((n^m))}{h(\mu)h(\nu+(s^n))h((\mu+(s^m)+(n^m))\cup\nu')}
\\&=\frac{1}{h(\mu)h(\nu)h(\mu+(s^m))h(\nu+(s^n))}
\\&\qquad\qquad\times\prod _{i=1}^m\prod _{j=1}^n\frac{m+n-i-j+1}{\mu_i+\nu_j+m+n-i-j+s+1}.
\end{align*}
\end{corollary}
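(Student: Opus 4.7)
The plan is to rewrite both sides of Theorem \ref{thm: power series identity} as Schur function expansions via the classical antisymmetrization of Vandermonde-like sums. Both sides of \eqref{eq: identity} are antisymmetric in $\mathbf{z}$---the LHS manifestly, the RHS because the cross factor $\prod_{i\leq m,\,j\geq m+1}(z_i-z_j)$ combined with $\Delta(\mathbf{z}_1)\Delta(\mathbf{z}_2)$ rebuilds $\Delta(\mathbf{z})$---and the sum coefficients are antisymmetric in $\mathbf{k}$ (LHS) or in $\mathbf{k}_1$ and $\mathbf{k}_2$ separately (RHS). Hence the orbit-collection identity $\sum_{\mathbf{k}}f(\mathbf{k})\mathbf{z}^{\mathbf{k}}=\sum_{k_1>\cdots>k_N\geq s}f(\mathbf{k})\det(z_i^{k_j})$ applies (restricting to $s$ a non-negative integer so that $1/\Gamma(k_i-s+1)=0$ for $k_i<s$).

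For the LHS, substituting $k_j=\lambda_j+s+(m+n-j)$ bijects strict decreasing $\mathbf{k}$ with $k_{m+n}\geq s$ onto partitions $\lambda$ of length at most $m+n$, and the determinant factors as $\prod_i z_i^s\cdot\Delta(\mathbf{z})\,s_\lambda(\mathbf{z})$. Applying the hook-length formula \eqref{eq: hook length} to both $\lambda$ and $\lambda+(s^{m+n})$, together with the Weyl dimension identity $\Delta(\mathbf{k})=C_{m+n}\,s_\lambda(1^{m+n})$ (obtained by evaluating the bialternant definition of $s_\lambda$ at $\mathbf{z}=(1,\dots,1)$), expresses the coefficient as $a_s(\lambda)/(C_{m+n}\,s_\lambda(1^{m+n}))$. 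The RHS is handled analogously with $k_j=\mu_j+s+(m-j)$ and $k_{m+j'}=\nu_{j'}+s+(n-j')$: the two block determinants produce $s_\mu(\mathbf{z}_1)$ and $s_\nu(\mathbf{z}_2)$, the cross factor merges the Vandermondes into $\Delta(\mathbf{z})$, and $\prod(k_i+k_j-s+1)^{-1}$ becomes $\prod_{i,j}(\mu_i+\nu_j+m+n-i-j+s+1)^{-1}$. Cancelling the common prefactor $\prod z_i^s\,\Delta(\mathbf{z})$ on both sides and using $C_{m+n}/(C_m C_n)=\prod_{i,j}(m+n-i-j+1)=h((n^m))$ (read off as the hook product of the $m\times n$ rectangle) delivers the second form of $b_s(\mu,\nu)$.

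The equivalence of the two forms of $b_s$ comes from a hook-length computation on $\lambda_0:=(\mu+(s^m)+(n^m))\cup\nu'$. Inspecting arms and legs separately shows that in $\lambda_0$ the hooks of the boxes in the top-left $m\times n$ sub-rectangle are precisely $\mu_i+\nu_j+m+n-i-j+s+1$, the hooks of the remaining boxes of the first $m$ rows (the ``$\mu+(s^m)$'' tail) collectively give $h(\mu+(s^m))$, and the hooks of the boxes in the $\nu'$ rows collectively give $h(\nu')=h(\nu)$. This yields $h(\lambda_0)=h(\mu+(s^m))\,h(\nu)\,\prod_{i,j}(\mu_i+\nu_j+m+n-i-j+s+1)$, converting between the two expressions for $b_s$ once $h((n^m))$ is recognized as the numerator. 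The main obstacle is this final hook bookkeeping on $\lambda_0$; the rest is routine antisymmetrization together with standard factorial/hook manipulations.
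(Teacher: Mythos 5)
Your proposal is correct and follows essentially the same route as the paper: antisymmetrize both sides of \eqref{eq: identity} to collect the sums into bialternants, convert factorials and Vandermondes into hook products and dimensions via \eqref{eq: hook length} and $s_\lambda(1^{m+n})=\Delta(\mathbf{k})/C_{m+n}$, and identify the cross product $\prod(\mu_i+\nu_j+m+n-i-j+s+1)$ with the hooks of the top-left $m\times n$ rectangle of $(\mu+(s^m)+(n^m))\cup\nu'$. The only cosmetic difference is that you absorb the shift by $s$ into the substitution $k_j=\lambda_j+s+(m+n-j)$ rather than first dividing by $\prod z_i^s$, and you spell out the hook bookkeeping on $\lambda_0$ that the paper delegates to its figure.
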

\begin{proof}
Divide both sides of \eqref{eq: identity} by $\Delta(\mathbf{z})$. The coefficients of
\begin{equation}
\label{exp: LHS schur}
\frac{1}{\Delta(\mathbf{z})}\sum _{\mathbf{k}}\frac{\Delta(\mathbf{k})}{\prod _ik_i!(k_i+s)!}\mathbf{z}^\mathbf{k}
\end{equation}
are antisymmetric in $\mathbf{z}$, and so sum over $k_1>\dotso>k_{m+n}$ and let $\lambda_i=k_i-m-n+i$. Then \eqref{exp: LHS schur} is equal to
\begin{equation*}
\sum _{\lambda}\frac{\Delta(\mathbf{k})}{\prod _ik_i!(k_i+s)!}\frac{\det(z_i^{k_j})}{\Delta(\mathbf{z})}=\sum _{\lambda}\frac{\Delta(\mathbf{k})}{\prod _ik_i!(k_i+s)!}s_\lambda(z_1,\dotsc,z_{m+n}).
\end{equation*}
Similarly,
\begin{equation*}
\begin{split}
&\frac{1}{\Delta(\mathbf{z})}\sum _{\mathbf{k}}\frac{\Delta(\mathbf{k}_1)\Delta(\mathbf{k}_2)}{\prod _ik_i!(k_i+s)!}\prod_{\substack{{i\leq m}\\{j\geq m+1}}} \frac{z_i-z_j}{k_i+k_j+s+1}\mathbf{z}^\mathbf{k}
\\=&\sum _{\mu,\nu}\frac{\Delta(\mathbf{k}_1)\Delta(\mathbf{k}_2)}{\prod _ik_i!(k_i+s)!}\prod_{\substack{{i\leq m}\\{j\geq m+1}}} \frac{1}{k_i+k_j+s+1}s_\mu(z_1,\dotsc,z_m)s_\nu(z_{m+1},\dotsc,z_{m+n}).
\end{split}
\end{equation*}
By the formula for the hook length given by \eqref{eq: hook length} and the formula $s_\lambda(1^{m+n})=\frac{\Delta(\mathbf{k})}{C_{m+n}}$ (and similarly for $s_\mu$ and $s_\nu$), the identity
\begin{equation*}
\begin{split}
&\sum_{\lambda}\frac{1}{h(\lambda)h(\lambda+(s^{m+n}))}\frac{s_\lambda(z_1,\dotsc,z_{m+n})}{s_\lambda(1^{m+n})}
\\=&\sum _{\mu,\nu}\frac{C_{m+n}}{C_mC_nh(\mu)h(\mu+(s^m))h(\nu)h(\nu+(s^n))}
\\&\qquad\times\prod_{\substack{{i\leq m}\\{j\geq m+1}}} \frac{1}{k_i+k_j+s+1}\frac{s_\mu(z_1,\dotsc,z_{m})s_\nu(z_{m+1},\dotsc,z_{m+n})}{s_\mu(1^m)s_\nu(1^n)}
\end{split}
\end{equation*}
is obtained. Finally, noting that $h((n^m))=\frac{C_{m+n}}{C_mC_n}$ and $\frac{1}{k_i+k_j+s+1}$ is the hook length of the box $(i,j)$ in $(\nu+(s^m)+(n^m))\cup\nu'$ gives the result (see Figure \ref{fig: joined partition} for an example).

\begin{figure}
\caption{The partition $(\mu+(s^m)+(n^m))\cup \nu'$ for $s=2$, $m=2$, $n=3$}
\label{fig: joined partition}
\begin{tikzpicture}
\Yfillopacity{0}
\Ylinecolour{gray}
\Ylinethick{0.05pt}
\tgyoung(0cm,0cm,;;;;;;;,;;;;;;,;;,;;,;,;)
\draw[thick] (0pt,13pt)--(0pt,-65pt)--(13pt,-65pt)--(13pt,-39pt)--(26pt,-39pt)--(26pt,-13pt)--(78pt,-13pt)--(78pt,0pt)--(91pt,0pt)--(91pt,13pt)--(0pt,13pt);
\draw[thick] (0pt,-13pt)--(39pt,-13pt)--(39pt,13pt);
\draw[thick] (65pt,13pt)--(65pt,-13pt);
\draw (0pt,13pt) -- (0pt,-13pt) node [black,midway,xshift=-10pt]{$m$};
\draw (0pt,13pt) -- (39pt,13pt) node [black,midway,yshift=6pt]{$n$};
\draw (39pt,13pt) -- (65pt,13pt) node [black,midway,yshift=6pt]{$s$};
\draw (26pt,-50pt) node{$\nu'$};
\draw (80pt,-20pt) node{$\mu$};
\end{tikzpicture}
\end{figure}
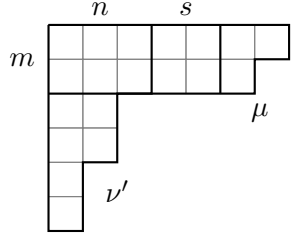
\end{proof}
Note that if $s$ is negative, $\lambda+(s^{m+n})$ may be interpreted as $\lambda/(-s)^{m+n}$, and the sum is over $\lambda$ containing $(-s)^{m+n}$. Also, note that the coefficient
\begin{equation*}
\frac{h((n^m))}{h(\mu)h(\nu+(s^n))h((\mu+(s^m)+(n^m))\cup\nu')}
\end{equation*}
is actually symmetric in $\mu,\nu$. Since the Schur functions are a basis for the ring of symmetric functions, $s_\lambda(z_1,\dotsc,z_{m+n})$ can be expanded in terms of $s_\mu(z_1,\dotsc,z_m)s_\nu(z_{m+1},\dotsc,z_{m+n})$, giving
\begin{equation*}
s_\lambda(z_1,\dotsc,z_{m+n})=\sum _{\mu,\nu}c_{\mu\nu}^\lambda s_\mu(z_1,\dotsc,z_m)s_\nu(z_{m+1},\dotsc,z_{m+n}).
\end{equation*}
The coefficients $c_{\mu\nu}^\lambda$ are known as the Littlewood-Richardson coefficients, and they appear as multiplicities of irreducible representations in tensor products and induced and restricted representations (see \cite[Ch. 38]{B04}). For example, if $V_\lambda$ is the representation of $\SU(m+n)$ with highest weight $\lambda$, then the multiplicity of $V_{\mu}\otimes V_\nu$ in the restriction of $V_\lambda$ to $\SU(m)\times \SU(n)$ is $c_{\mu\nu}^\lambda$. 

Note that in this form, both sides of \eqref{eq: schur identity} are Schur generating functions as defined in \cite{BG16}, and the coefficients may be viewed as (unnormalized) probabilities on the space of partitions. The normalization constant can be computed explicitly in a determinantal form either from the formulas given in \cite{SW03} (although note that due to the repeated eigenvalues, a non-trivial limit must be taken), or directly from the integral
\begin{equation*}
\int \det(X)^s\exp(\tr(X+X^{-1}))dX
\end{equation*}
which can be evaluated using the integral form of the Cauchy-Binet identity due to Andr\'eief \cite{A83} after applying the Weyl integration formula. Either method gives
\begin{equation*}
Z_{m,n,s}=\sum_{l(\lambda)\leq m+n} \frac{1}{h(\lambda)h(\lambda+(s^{m+n}))}=\det(I_{i-j+s}(2))
\end{equation*}
where the matrix is $(m+n)\times (m+n)$.

The identity may then be interpreted as a statement about random partitions (or equivalently random irreducible representations of $\SU(n)$). Pick an irreducible representation of $\SU(m+n)$ according to the measure 
\begin{equation*}
P_{m,n,s}(\lambda)=\frac{1}{Z_{m,n,s}h(\lambda)h(\lambda+(s^{m+n}))}.
\end{equation*}
Then restrict the representation $V_\lambda$ to $\SU(m)\times \SU(n)$, and pick an irreducible representation of $\U(m)\times \U(n)$ appearing in $V_\lambda$ with probability $c_{\mu\nu}^\lambda s_\mu(1^m)s_\nu(1^n)/s_\lambda(1^{m+n})$. This is exactly a random irreducible constituent of $V_\lambda$ weighted by the dimension of the isotypic component for $V_{\mu}\otimes V_\nu$. This defines a joint law for $\mu,\nu$ which are random partitions of length at most $m,n$ respectively, and Corollary \ref{cor: schur identity} says that this joint law is given by
\begin{equation*}
Q_{m,n,s}(\mu,\nu)=\frac{h((m^n))}{Z_{m,n,s}h(\mu)h(\nu+(s^n))h((\mu+(s^m)+(n^m))\cup\nu')}.
\end{equation*}

Note that if $s=0$, then $\lambda$ is distributed as the Poissonized Plancherel measure (with parameter $1$) conditioned to have length at most $m+n$ parts. The Poissonized Plancherel measure with parameter $\alpha$ is the measure which assigns probability $e^{-\alpha}\frac{\alpha^{|\lambda|}}{h(\lambda)^2}$ to $\lambda$ (see \cite{BOO00}, \cite{J01}).

Finally, the asymptotic behaviour of $P$ and $Q$ is computed as $m,n,s$ are sent to infinity. First, the following lemmas are proved which give asymptotics for ratios of hook lengths.

\begin{lemma}
\label{lemma: hook length ratio limit}
Let $\lambda$ be a partition. Then
\begin{equation*}
\lim _{m,n\rightarrow \infty}\frac{h((n^m))}{h(\lambda+(n^m))}=\frac{1}{h(\lambda)}\left(\alpha+1\right)^{-|\lambda|}.
\end{equation*}
\end{lemma}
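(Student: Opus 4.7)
The plan is to compare the two hook-length products box-by-box, peel off an automatic factor of $h(\lambda)$, and then evaluate the remaining product in the limit.

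First, I would write out the hook lengths explicitly. For the rectangle $(n^m)$, the box $(i,j)$ has hook length $(m-i)+(n-j)+1$. For $\nu=\lambda+(n^m)$, we have $\nu_i=\lambda_i+n$, together with $\nu'_j=m$ for $j\leq n$ and $\nu'_j=\lambda'_{j-n}$ for $j>n$. So at $(i,j)$ the hook length in $\nu$ is
\begin{equation*}
\begin{cases}(m-i)+(n-j)+1+\lambda_i & \text{if } j\leq n,\\ (\lambda_i-(j-n))+\lambda'_{j-n}-i+1 & \text{if } j>n,\end{cases}
\end{equation*}
and in the second case this is exactly the hook length at $(i,j-n)$ in $\lambda$. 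Hence the boxes with $j>n$ contribute precisely $h(\lambda)$ to $h(\nu)$, and we obtain
\begin{equation*}
\frac{h((n^m))}{h(\lambda+(n^m))}=\frac{1}{h(\lambda)}\prod_{i=1}^{m}\prod_{j=1}^{n}\frac{(m-i)+(n-j)+1}{(m-i)+(n-j)+1+\lambda_i}.
\end{equation*}

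Next, I would simplify the inner product. Since $\lambda_i=0$ for $i>\ell(\lambda)$, only finitely many rows contribute. For fixed $i\leq\ell(\lambda)$, substitute $a=(m-i)+(n-j)+1$ so that $a$ runs over the integers $m-i+1,\dotsc,m+n-i$; the product telescopes in pairs to
\begin{equation*}
\prod_{a=m-i+1}^{m+n-i}\frac{a}{a+\lambda_i}=\prod_{k=1}^{\lambda_i}\frac{m-i+k}{m+n-i+k}.
\end{equation*}

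Finally, under the asymptotic regime implicit in the statement (namely $n/m\to\alpha$, so that $m/(m+n)\to(\alpha+1)^{-1}$), each of the $|\lambda|$ factors above tends to $(\alpha+1)^{-1}$, giving the claimed limit $h(\lambda)^{-1}(\alpha+1)^{-|\lambda|}$. The only genuine subtlety is the conjugate-partition bookkeeping in the range $j>n$, which is what produces the free factor of $h(\lambda)$; once that identification is in place, the remainder is a routine product calculation.
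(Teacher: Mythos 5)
Your proof is correct and follows essentially the same route as the paper: both reduce the ratio to $\frac{1}{h(\lambda)}\prod_{(i,j)\in(n^m)}\frac{m+n-i-j+1}{\lambda_i+m+n-i-j+1}$, collapse each row's product to $\lambda_i$ surviving factors, and observe that each tends to $(\alpha+1)^{-1}$. You simply supply more detail on the conjugate-partition bookkeeping that isolates the factor $h(\lambda)$, which the paper leaves implicit.
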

where $n/m\rightarrow\alpha$ (if $\alpha=0$, then $n$ may be constant).
\begin{proof}
For $m\geq l(\lambda)$,
\begin{equation*}
\begin{split}
\frac{h((n^m))}{h(\lambda+(n^m))}&=\frac{1}{h(\lambda)}\prod _{(i,j)\in (n^m)}\frac{m+n-i-j+1}{\lambda_i+m+n-i-j+1}
\\&=\frac{1}{h(\lambda)}\prod _{i=1}^{l(\lambda)}\frac{(m+n-i)!(\lambda_i+m-i)!}{(m-i)!(\lambda_i+m+n-i)!}
\end{split}
\end{equation*}
because if $i>l(\lambda)$ then the hook lengths cancel, and this converges to $\frac{1}{h(\lambda)}(\alpha+1)^{-|\lambda|}$.
\end{proof}

\begin{lemma}
\label{lemma: hook length ratio limit 2}
Let $\mu,\nu$ be partitions. Then
\begin{equation*}
\lim _{n,m\rightarrow \infty}\frac{h((n^m))}{h((\mu+(n^m))\cup \nu')}=\frac{1}{h(\mu)h(\nu)}(\alpha+1)^{-|\mu|}(\alpha^{-1}+1)^{-|\nu|}
\end{equation*}
where the limit is taken such that $n/m=\alpha$.
\end{lemma}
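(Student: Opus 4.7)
The plan is to decompose $\rho := (\mu+(n^m)) \cup \nu'$ into three regions---the rectangle $(n^m)$, the shape $\mu$ appended to the right of this rectangle, and $\nu'$ appended below it---and to compare the hook lengths of $\rho$ to those of $(n^m)$ region by region. Computing arm plus leg shows that for a box lying in the $\mu$-region the hook length in $\rho$ agrees with the hook length in $\mu$ at the corresponding position (the rows of $\nu'$ are too short to lengthen any leg, and nothing extra lies to the right), and similarly hooks in the $\nu'$-region of $\rho$ agree with those of $\nu'$, which has the same hook product as $\nu$. For a box $(i,j)$ inside the rectangle, row $i$ of $\rho$ has length $n+\mu_i$ and column $j$ has length $m+\nu_j$, so the hook length is $(m+n-i-j+1) + \mu_i + \nu_j$. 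Multiplying over all boxes of $\rho$ yields
\begin{equation*}
\frac{h((n^m))}{h(\rho)} = \frac{1}{h(\mu)\,h(\nu)} \prod_{i=1}^m \prod_{j=1}^n \frac{m+n-i-j+1}{m+n-i-j+1+\mu_i+\nu_j}.
\end{equation*}

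Next I would analyze the asymptotics of this double product. Factors with $\mu_i = \nu_j = 0$ are trivially $1$, so (for $m \geq l(\mu)$ and $n \geq l(\nu)$) only three pieces contribute: a corner $1 \leq i \leq l(\mu),\ 1 \leq j \leq l(\nu)$ containing a bounded number of factors, each tending to $1$; a $\mu$-strip $1 \leq i \leq l(\mu),\ l(\nu) < j \leq n$; and a symmetric $\nu$-strip. For fixed $i \leq l(\mu)$ the $j$-product in the $\mu$-strip telescopes to
\begin{equation*}
\frac{(m+n-i-l(\nu))!\,(m-i+\mu_i)!}{(m-i)!\,(m+n-i-l(\nu)+\mu_i)!},
\end{equation*}
whose Stirling asymptotic is $(m/(m+n))^{\mu_i} \to (1+\alpha)^{-\mu_i}$, so the $\mu$-strip contributes $(1+\alpha)^{-|\mu|}$. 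By an identical argument (or by exchanging the roles of $(m,\mu)$ and $(n,\nu)$), the $\nu$-strip contributes $(1+\alpha^{-1})^{-|\nu|}$. Combining the three pieces with the prefactor $1/(h(\mu)h(\nu))$ gives the stated limit.

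The main obstacle is the hook-length decomposition in the first step: one must verify carefully that attaching $\mu$ and $\nu'$ to the rectangle preserves their internal hook lengths inside $\rho$, so that all cross-interaction between $\mu$ and $\nu$ is concentrated in the $\mu_i+\nu_j$ correction to the rectangle hooks. Once this identity is in place, the asymptotic evaluation is a straightforward Stirling estimate paralleling that of Lemma \ref{lemma: hook length ratio limit}.
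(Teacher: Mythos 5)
Your proof is correct and follows essentially the same route as the paper: the same decomposition of the hook-length ratio into a bounded corner product plus a $\mu$-strip and a $\nu$-strip, resting on the same observation that hooks inside the appended $\mu$ and $\nu'$ are unchanged while rectangle hooks pick up the $\mu_i+\nu_j$ correction. The only cosmetic difference is that you evaluate the two strips by telescoping directly, whereas the paper recognizes them as instances of Lemma \ref{lemma: hook length ratio limit} and cites it --- the underlying computation is identical.
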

\begin{proof}
The hook length of any box $(i,j)$ in $(n^m)$ with $i>l(\mu)$ and $j>l(\nu)$ is not affected by adding $\mu$ and $\nu$ to $(n^m)$. Thus, only boxes with $i\leq l(\mu)$ or $j\leq l(\nu)$ need to be considered. Break up the product into three factors depending on whether $i\leq l(\mu)$, $j\leq l(\nu)$ or both hold, obtaining
\begin{equation*}
\begin{split}
&\frac{h((n^m))}{h((\mu+(n^m))\cup \nu')}
\\=&\prod_{\substack{i\leq l(\mu)\\j\leq l(\nu)}} \frac{n+m-i-j+1}{\mu_i+\nu_j+n+m-i-j+1}\frac{h(((n-l(\nu))^{m}))}{h(\mu+((n-l(\nu))^m))}\frac{h(((m-l(\mu))^n)}{h(\nu+(m-l(\mu)))}.
\end{split}
\end{equation*}
The first factor corresponds to boxes with $i\leq l(\mu)$ and $j\leq l(\nu)$. The second and third factors correspond to boxes with $j>l(\nu)$ and $i>l(\mu)$ respectively. The boxes with $i>l(\mu)$ and $j>l(\nu)$ appear in both the second and third factors but do not actually contribute because their hook lengths in $(n^m)$ and $(\mu+(n^m))\cup \nu'$ are the same.

The first factor converges to $1$ because the number of factors is fixed, and the second and third can be computed using Lemma \ref{lemma: hook length ratio limit}, giving
\begin{equation*}
\frac{1}{h(\mu)h(\nu)}(\alpha+1)^{-|\mu|}(\alpha^{-1}+1)^{-|\nu|}.
\end{equation*}
\end{proof}

These lemmas can be used to compute the limiting distribution of $\lambda$ and $\mu,\nu$. The following proposition states that $\lambda$ converges in distribution to a Poissonized Plancherel random partition and $\mu,\nu$ converge jointly in distribution to independent Poissonized Plancherel random partitions with parameters depending on the relative rate at which $m,n,s$ go to infinity.

\begin{proposition}
Let $m,n,s\rightarrow \infty$ with $s/m\rightarrow \alpha$ and $s/n\rightarrow\beta$. Then if $\gamma=\beta/(\alpha\beta+\alpha+\beta)$ and $\delta=\alpha/(\alpha\beta+\alpha+\beta)$,
\begin{align*}
\lim _{m,n,s\rightarrow \infty}P_{m,n,s}(\lambda)&= \frac{e^{-(\gamma+\delta)}(\gamma+\delta)^{|\lambda|}}{h(\lambda)^2}
\\\lim _{m,n,s\rightarrow \infty}Q_{m,n,s}(\mu,\nu)&= \frac{e^{-(\gamma+\delta)}\gamma^{|\mu|}\delta^{|\nu|}}{h(\mu)^2h(\nu)^2},
\end{align*}
and $s$ can be taken constant by taking $\alpha,\beta\rightarrow 0$ with some specified rate.
\end{proposition}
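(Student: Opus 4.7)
The plan is to factor each of $P_{m,n,s}(\lambda)$ and $Q_{m,n,s}(\mu,\nu)$ so that the hook-length ratios of Lemmas \ref{lemma: hook length ratio limit} and \ref{lemma: hook length ratio limit 2} appear directly, obtain pointwise limits up to the unknown global normalization, and then pin down the normalization by requiring the limit to be a probability measure, using the Plancherel identity $\sum_\lambda\theta^{|\lambda|}/h(\lambda)^2=e^\theta$.

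First, for $P_{m,n,s}$ I would write
\begin{equation*}
\frac{P_{m,n,s}(\lambda)}{P_{m,n,s}(\emptyset)}=\frac{1}{h(\lambda)}\cdot\frac{h((s^{m+n}))}{h(\lambda+(s^{m+n}))}
\end{equation*}
and apply Lemma \ref{lemma: hook length ratio limit} to the rectangle $(s^{m+n})$. Since $m\sim s/\alpha$ and $n\sim s/\beta$ give $s/(m+n)\to\alpha\beta/(\alpha+\beta)$, the lemma produces the pointwise limit $(\gamma+\delta)^{|\lambda|}/h(\lambda)^2$. For $Q_{m,n,s}$ I would use $(s^m)+(n^m)=((s+n)^m)$ to split
\begin{equation*}
\frac{Q_{m,n,s}(\mu,\nu)}{Q_{m,n,s}(\emptyset,\emptyset)}=\frac{1}{h(\mu)}\cdot\frac{h((s^n))}{h(\nu+(s^n))}\cdot\frac{h(((s+n)^m))}{h((\mu+((s+n)^m))\cup\nu')},
\end{equation*}
apply Lemma \ref{lemma: hook length ratio limit} to $(s^n)$ (aspect ratio $s/n\to\beta$), and apply Lemma \ref{lemma: hook length ratio limit 2} to $((s+n)^m)$ (aspect ratio $(s+n)/m\to\alpha(\beta+1)/\beta$). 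A short algebraic simplification identifies the two factors coming from Lemma \ref{lemma: hook length ratio limit 2} as $\gamma^{|\mu|}$ and $(\delta(\beta+1))^{|\nu|}$; the spurious $(\beta+1)^{\pm|\nu|}$ factors then cancel, leaving the pointwise limit $\gamma^{|\mu|}\delta^{|\nu|}/(h(\mu)^2 h(\nu)^2)$.

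The normalizations are then read off by summation: the Plancherel identity $\sum_\lambda\theta^{|\lambda|}/h(\lambda)^2=e^\theta$ and its two-variable analogue give the claimed factor $e^{-(\gamma+\delta)}$ in both cases. The main obstacle I anticipate is justifying the interchange of limit and sum in $\sum_\lambda P_{m,n,s}(\lambda)=1$ needed to conclude $P_{m,n,s}(\emptyset)\to e^{-(\gamma+\delta)}$. My plan for this step is to extract uniform upper bounds of the form $P_{m,n,s}(\lambda)/P_{m,n,s}(\emptyset)\leq C\rho^{|\lambda|}/h(\lambda)^2$ for any fixed $\rho>\gamma+\delta$ and all sufficiently large $m,n,s$, directly from the explicit hook-length product formulas used in the proofs of Lemmas \ref{lemma: hook length ratio limit} and \ref{lemma: hook length ratio limit 2}, and then invoke dominated convergence. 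The corresponding argument for $Q_{m,n,s}(\emptyset,\emptyset)$ is analogous. The constant-$s$ case is the degenerate limit $\alpha,\beta\to 0$ with the ratio $\alpha/\beta$ held fixed (equivalently, $n/m$ fixed), so that $\gamma$, $\delta$ and $\gamma+\delta$ still have well-defined nonzero limits (in particular $\gamma+\delta\to 1$), and the same argument goes through.
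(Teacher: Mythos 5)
Your proposal is correct and follows essentially the same route as the paper: factor out the rectangle hook lengths, apply Lemmas \ref{lemma: hook length ratio limit} and \ref{lemma: hook length ratio limit 2} with the aspect ratios $s/(m+n)\to\alpha\beta/(\alpha+\beta)$ and $(s+n)/m\to\alpha(\beta+1)/\beta$, cancel the $(\beta+1)^{\pm|\nu|}$ factors, and fix the normalization by dominated convergence (the needed domination is the trivial bound $\rho=1$, $C=1$, since every hook-length ratio factor is at most $1$). The only cosmetic difference is that the paper pins down the normalization of $Q$ by the algebraic identity $h((s^{m+n}))h((n^m))=h((s^n))h(((s+n)^m))$, reducing it to the normalization already computed for $P$, whereas you run a second dominated-convergence argument; both work.
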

\begin{proof}
First, renormalize by multiplying and dividing by $h((s^{m+n}))$. Then the first limit follows from Lemma \ref{lemma: hook length ratio limit} by noting that $s/(m+n)\rightarrow \frac{\alpha\beta}{\alpha+\beta}$. To see that
\begin{equation*}
\lim _{m,n,s\rightarrow \infty}\frac{1}{Z_{m,n,s}h(s^{m+n})}= e^{-\frac{\alpha+\beta}{\alpha\beta+\alpha+\beta}}
\end{equation*}
note that the summands are dominated by $h(\lambda)^{-2}$ and so the limit passes through the integral, and
\begin{equation*}
\sum _{\lambda}\frac{(\gamma+\delta)^{|\lambda|}}{h(\lambda)^2}=e^{-(\gamma+\delta)}.
\end{equation*}
The second limit can be computed by noting that by Lemma \ref{lemma: hook length ratio limit},
\begin{equation*}
\lim _{m,n,s\rightarrow \infty}\frac{h((s^n))}{h(\nu+(s^n))}= \frac{1}{h(\nu)}(\beta+1)^{-|\nu|}
\end{equation*}
and by Lemma \ref{lemma: hook length ratio limit 2},
\begin{equation*}
\begin{split}
&\lim _{m,n,s\rightarrow \infty}\frac{h((s+n)^m)}{h((\mu+(s+n)^m)\cup\nu')}
\\=& \frac{1}{h(\mu)h(\nu)}(\alpha+\alpha/\beta+1)^{-|\mu|}((\alpha+\alpha/\beta)^{-1}+1)^{-|\nu|}.
\end{split}
\end{equation*}
Finally,
\begin{equation*}
\frac{h((s^{m+n}))h((n^m))}{h((s^n))h((s+n)^m)}=1 
\end{equation*}
because $h((k^l))=\frac{C_{k+l}}{C_kC_l}$, and so
\begin{equation*}
\begin{split}
&\lim _{m,n,s\rightarrow \infty}\frac{h((m^n))}{Z_{m,n,s}h(\mu)h(\nu+(s^n))h((\mu+(s^m)+(n^m))\cup\nu')}
\\=&\lim _{m,n,s\rightarrow \infty}\frac{1}{Z_{m,n,s}h((s^{m+n}))}\left(\frac{h((s^{m+n}))h((m^n))}{h(\mu)h((s^n))h((s+n)^m)}\right)
\\&\qquad\qquad\qquad\times\left(\frac{h((s^n))}{h(\nu+(s^n))}\right)\left(\frac{h((s+n)^m)}{h((\mu+(s+n)^m)\cup \nu')}\right)
\\=&\frac{e^{-\frac{\alpha+\beta}{\alpha\beta+\alpha+\beta}}(\alpha+\alpha/\beta+1)^{-|\mu|}(\beta+\beta/\alpha+1)^{-|\nu|}}{h(\mu)^2h(\nu)^2}
\end{split}
\end{equation*}
and this gives the desired result.
\end{proof}

\bibliography{bibliography}{}
\bibliographystyle{amsplain}

\end{document}